\newtheorem{theorem}{Theorem}
\newtheorem{corollary}{Corollary}[theorem]
\newtheorem{property}{Property}
\newcommand{\qed}{\ifhmode\unskip\nobreak\fi\ifmmode\ifinner
\else\hskip5 pt\fi\fi \hbox{\hskip5 pt
\vrule width4 pt  height6 pt  depth1.5 pt \hskip 1pt }}
\title{ Integer Laplacian eigenvalues\\ of  strictly chordal graphs}
\author{Nair Abreu\\
PEP-COPPE - Universidade Federal do Rio de Janeiro\\ 
nairabreunovoa@gmail.com
\and
Claudia Marcela Justel\\
Departamento de Engenharia de Computa\c c\~ao\\ Instituto Militar de Engenharia \\
cjustel@ime.eb.br
\and
Lilian Markenzon\\
PPGI/NCE - Universidade Federal do Rio de Janeiro\\ 
markenzon@nce.ufrj.br
}
\date{\ }
\begin{document}
\maketitle

\begin{abstract}
In this paper, we establish the relation between classic invariants of graphs and their integer Laplacian eigenvalues, 
focusing on a subclass of chordal graphs, the  strictly chordal graphs, and
pointing out how their computation can be efficiently implemented. 
Firstly we review results concerning general graphs showing that 
the number of universal vertices and the degree of false and true twins provide integer Laplacian eigenvalues and their multiplicities. 
Afterwards, we prove that many integer Laplacian eigenvalues of 
a strictly chordal graph are directly related to  particular simplicial vertex sets 
and to the minimal vertex separators of the graph.
 \end{abstract}

Keywords:
integer Laplacian eigenvalue,  strictly chordal graph, universal vertex, false and true  twins,  minimal vertex separator.


\section{Introduction}\label{intro}

Let  $G=(V,E)$  be a connected graph, 
where $|E|=m$ is its {\em size} and
 $|V| = n $ is its {\em order}. 
The {\em neighborhood\/} of a vertex $v \in V$ is denoted by
$N(v) = \{ w \in V; \{v,w\} \in E\}$ and its {\em closed neighborhood} by  $N[v] = N(v)\cup \{v\} $.
Two vertices $u,v \in V$  are  {\em false twins} if  $N(u) = N(v)$
and  {\em true twins} if $N[u] = N[v]$.
For any $S \subseteq V$, 
the subgraph of $G$ induced by $S$ is denoted by $G[S]$ and  
if it is a complete subgraph then $S$ is a \emph{clique} in $G$. 
 The complete graph of order  $n$  is denoted by $K_n$.
A vertex $v$ is said to be {\em
simplicial\/} in $G$ when $N(v)$ is a clique in $G$; it is said to be {\em universal} when $N[v] = V$.

It is important to mention two kinds of cliques in a  chordal graph $G$. 
A  {\em simplicial clique} is a maximal clique containing at least one simplicial vertex.
A simplicial clique $Q$ is called a {\em boundary clique} if there exists a maximal clique $Q^\prime$
such that  $Q \cap Q^\prime $ is the set of non-simplicial vertices of $Q$. 

For $i$, $1 \leq i \leq n$, let $d_i$ be the degree of vertex $v_i$ of $G$.
The Laplacian matrix of $G$ of order $n$ is defined as $L(G) = D(G) - A(G)$, 
where $D(G) = diag(d_1,...,d_n)$ denotes the diagonal degree matrix and $A(G)$ the adjacency matrix of $G$.
As $L(G)$ is symmetric and positive semidefinite all its eigenvalues are non-negative real numbers. 
We denote the eigenvalues of $L(G)$, called the \emph{Laplacian eigenvalues}  of $G$, 
by $\mu_1(G)\geq \dots \geq \mu_n(G)$. 
All different  Laplacian eigenvalues of $G$ together with their multiplicities form the 
Laplacian spectrum  of $G$,  denoted by $Spec L(G)$.  
A graph is called \emph{Laplacian integral}  if its spectrum consists of integers;
in the literature there are several articles about it  \cite{FKMN05,FI95,GMS90, GM94,Ki05,K10,Me94}. 

This paper resumes the subject already treated by the authors in \cite{Ab18} and \cite{Ab19}. 
We establish the relation between classic invariants of graphs and their integer Laplacian eigenvalues, 
pointing out how their computation can be efficiently implemented. 
We focus on a subclass of chordal graphs \cite{Ha63}, the  {\em block duplicate graphs}, introduced by Golumbic and Peled \cite{GP02} and also  defined as {\em strictly chordal graphs} based on hypergraph properties \cite{K05, KLY06};
this class contains the classes of block graphs \cite{Ha63}, block-indifference graphs \cite{Ab18},  
the generalized core-satellite graphs \cite{EB17} and the $(k, t)$-split graphs \cite{Ab19}.
In Section 2, we review results concerning general graphs, showing that 
the number of universal vertices and the degree of false and true twins can provide integer Laplacian eigenvalues and their multiplicities. 
In Section 3, we prove that many integer Laplacian eigenvalues of 
a strictly chordal graph are directly related to   particular simplicial vertex sets 
and to the minimal vertex separators of the graph.

%

\section{Universal vertices, twin vertices and integer Laplacian eigenvalues}

This section is devoted to review known results from the literature concerning 
certain integer Laplacian eigenvalues based on classical invariants such as universal vertices and false and true twins. 
The goal is to show  that such results as rewritten here allow us to determine  these values 
 by means of simple algorithms.

 The proof of the next theorem derives from 
 Corollary 13.1.4 \cite{GR04} 
  which relates the universal vertices of $G$ with the connected components of its complementar graph $\overline G$.
It also derives from  Theorem 4.1.8 \cite{Mo12} that states that if a graph $G$ is a connected graph of order $n$
  then $n$ is a Laplacian eigenvalue of $G$ if and only if $G$ is a join of  two graphs.

\begin{theorem}\label{theo:universal}
Let $G$ be a connected non-complete graph of order $n$. 
If $G$ has $k$ universal vertices, $n$ is a Laplacian eigenvalue of $G$ with multiplicity $k$.
\end{theorem}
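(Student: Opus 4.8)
The plan is to exploit the two results quoted just before the statement. First I would recall that a connected graph $G$ of order $n$ has $n$ as a Laplacian eigenvalue if and only if $G = G_1 \vee G_2$ is the join of two graphs (Theorem 4.1.8 of \cite{Mo12}); equivalently, by Corollary 13.1.4 of \cite{GR04}, the multiplicity of $n$ as a Laplacian eigenvalue of $G$ equals the number of connected components of the complement $\overline{G}$ minus one. So the whole argument reduces to a statement about $\overline{G}$: I must show that if $G$ has exactly $k$ universal vertices (and $G$ is connected and non-complete, so $1 \le k \le n-2$ is possible, or $k=0$), then $\overline{G}$ has exactly $k+1$ connected components — wait, that is not quite right either, so the cleaner route is to argue the multiplicity directly.

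The key observation is this: a vertex $v$ is universal in $G$ precisely when $v$ is isolated in $\overline{G}$. Hence if $G$ has $k$ universal vertices, then $\overline{G}$ has $k$ isolated vertices, and the remaining $n-k$ vertices induce a subgraph $H$ of $\overline{G}$ on the non-universal vertices. Since $G$ is non-complete, $n - k \ge 2$; I would then observe that $H$ has at least one edge (otherwise every vertex of $G$ would be universal, contradicting non-completeness), but more importantly I claim $H$ is connected — no, I do not need that. What I actually need is only the count of components: $\overline{G}$ has the $k$ isolated vertices as $k$ components, plus $c(H)$ components coming from $H$, where $c(H) \ge 1$. The Laplacian spectrum of $G$ relates to that of $\overline{G}$ via $\mu_i(G) = n - \mu_{n-i}(\overline{G})$ for $i = 1, \dots, n-1$, together with $\mu_n(G) = 0$; since $\overline{G}$ is disconnected, the multiplicity of the eigenvalue $0$ of $L(\overline{G})$ equals its number of components $k + c(H)$, and transferring through the complement relation, the multiplicity of $n$ as an eigenvalue of $L(G)$ is $k + c(H) - 1$.

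So to finish I must show $c(H) = 1$, i.e. that the subgraph of $\overline{G}$ induced by the non-universal vertices of $G$ is connected. Suppose not: then the non-universal vertices of $G$ split into two nonempty sets $A$ and $B$ with no edges of $\overline{G}$ between them, i.e. in $G$ every vertex of $A$ is adjacent to every vertex of $B$. Now take any $a \in A$: its non-neighbors in $G$ all lie inside $A$ (it is adjacent to all universal vertices and all of $B$), and $a$ is non-universal so it has at least one non-neighbor, which must be in $A$; write that non-neighbor $a'$. Then consider whether $a$ is universal in $G[V \setminus (\text{universal vertices})]$ — this is getting complicated, so I would instead argue: if every non-universal vertex of $G$ were adjacent to everything outside its own part, iterate the splitting; the induced subgraph on non-universal vertices is a join of $G[A]$ and $G[B]$, hence has order $|A|+|B|$ as a Laplacian eigenvalue, which would create an extra copy of eigenvalue $n$ in $G$ only if $|A|+|B| = n$, impossible since $k \ge 1$ whenever there is a universal vertex — but if $k = 0$ the claim $c(H)=1$ is exactly the content we want and follows because then $G$ itself connected with no universal vertex means $\overline G$ connected, giving multiplicity $0$. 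The honest way to close the gap is a short direct argument that the non-universal part of $\overline G$ is connected, and I expect \textbf{this connectivity of $\overline{G}$ restricted to non-universal vertices} to be the one genuinely delicate point; everything else is bookkeeping with the complement spectrum relation $\mu_i(G) + \mu_{n-i}(\overline{G}) = n$ and the fact that eigenvalue-$0$ multiplicity counts components.
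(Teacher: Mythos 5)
Your reduction is the right one and is in fact more informative than the paper's own proof: you pass to the complement, use that the multiplicity of $n$ in $L(G)$ equals the number of connected components of $\overline{G}$ minus one, and observe that the $k$ universal vertices of $G$ are exactly the isolated vertices of $\overline{G}$, so that the multiplicity is $k + c - 1$, where $c \geq 1$ is the number of components of the subgraph of $\overline{G}$ induced by the non-universal vertices. This already proves multiplicity \emph{at least} $k$ completely. The paper's proof, by contrast, only invokes the join characterization (Theorem 4.1.8 of \cite{Mo12}) to conclude that $n$ is an eigenvalue and is silent on the multiplicity, so your route --- essentially the complement/component count the authors attribute to Corollary 13.1.4 of \cite{GR04} but do not actually carry out --- is the one that delivers the quantitative part of the claim.

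However, the step you single out as the delicate one, namely $c = 1$ (connectivity of $\overline{G}$ restricted to the non-universal vertices), is not merely delicate: it is false in general, and hence so is the theorem as literally stated. Take $G$ to be the wheel on five vertices, i.e.\ a $4$-cycle $a,b,c,d$ together with a vertex $u$ adjacent to all of $a,b,c,d$. Then $G$ is connected and non-complete, $n=5$, and $u$ is the unique universal vertex, so $k=1$; but $\overline{G}$ is the isolated vertex $u$ together with the two disjoint edges $ac$ and $bd$, hence has three components, and indeed $Spec L(G) = [0;\, 3^{(2)};\, 5^{(2)}]$: the eigenvalue $n=5$ has multiplicity $2 \neq k$. (The $k=0$ case fails as well, e.g.\ for $K_{2,2}$.) So you should not try to close the gap; you should stop at the inequality. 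Your argument, minus the attempted proof that $c=1$, is a correct and complete proof of the statement with ``multiplicity $k$'' replaced by ``multiplicity at least $k$'', which matches the ``at least'' phrasing of Theorems \ref{theo:falsetwins} and \ref{theo:truetwins} and gives, as a bonus, the exact multiplicity $k + c - 1$.
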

\begin{proof} 
Since $G$ can be expressed as the join of a graph induced by the  set of universal vertices and 
the graph induced by the remaining vertices, the theorem mentioned above can be applied.
\qed
\end{proof}

\medskip
Let $G=(V,E)$ be a connected graph and let $F \subseteq V$ ($T \subseteq V$) be
a set of false twins (true twins) of order $k$.
We can observe that the graph induced by $F$ is an  independent set of size $k$.

The  next  result can be  found in \cite{Ab12}  using the concept of clusters.
We present here a different proof.

\begin{theorem} \label{theo:falsetwins}
Let $G$ be a connected graph with a set of false twins $F $ such that each vertex has degree $d$. 
Then $d$ is an integer Laplacian eigenvalue of $L(G)$ with multiplicity at least $|F| -1$.
\end{theorem}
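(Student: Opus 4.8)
The plan is to exhibit explicit eigenvectors of $L(G)$ with eigenvalue $d$ supported on the false-twin set $F$. Write $F = \{u_1, \dots, u_k\}$ with $k = |F|$, and recall that since the $u_i$ are pairwise false twins, $N(u_i) = N(u_j)$ for all $i,j$; call this common neighborhood $W$, so $|W| = d$ and in particular no two vertices of $F$ are adjacent (an independent set, as already noted in the excerpt). For any vector $x$ supported on $F$, say $x = \sum_i c_i e_{u_i}$, the $v$-th entry of $L(G)x$ for $v \notin F$ is $-\sum_{u_i \sim v} c_i = -(\sum_i c_i)\,[v \in W]$, while the $u_j$-th entry is $d_{u_j} c_j - \sum_{w \sim u_j} x_w = d\,c_j$ since $x$ vanishes off $F$ and $F$ is independent. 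Hence $L(G)x = d\,x$ precisely when $\sum_i c_i = 0$.

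So the eigenspace for $d$ contains the subspace $\{\sum_i c_i e_{u_i} : \sum_i c_i = 0\}$, which has dimension $k-1 = |F|-1$. First I would set up this notation and verify the two entrywise computations above; then I would conclude that $d$ is a Laplacian eigenvalue with multiplicity at least $|F|-1$. A convenient concrete choice of a basis is $x^{(j)} = e_{u_1} - e_{u_{j+1}}$ for $j = 1, \dots, k-1$, which are manifestly linearly independent and each satisfies $\sum_i c_i = 0$; presenting this basis makes the lower bound on the multiplicity transparent.

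There is no real obstacle here — the argument is a direct verification — so the only thing to be careful about is the bookkeeping at vertices of $W$: one must note that for $v \in W$ the diagonal term $d_v x_v$ vanishes because $x_v = 0$, leaving only the off-diagonal contribution $-\sum_i c_i$, which is why the condition $\sum_i c_i = 0$ is exactly what kills that coordinate. I would also remark why the bound is only ``at least'': the full eigenspace of $d$ may be larger (for instance if vertices outside $F$ also happen to have eigenvector support), so equality is not claimed. If desired, one can alternatively phrase the whole proof via the quotient/equitable-partition matrix obtained by collapsing $F$, but the explicit-eigenvector approach is shorter and self-contained, so that is the route I would take.
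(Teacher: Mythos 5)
Your proposal is correct and takes essentially the same route as the paper: the paper orders the vertices so that $F$ comes first, displays the block structure of $L(G)$, and then exhibits the same eigenvectors $e_1-e_i$ ($2\le i\le k$) for the eigenvalue $d$. Your version just carries out the entrywise verification (including the cancellation at vertices of the common neighborhood $W$) more explicitly, which the paper leaves implicit in its block decomposition.
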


\begin{proof}
Consider $L(G)$ the Laplacian matrix of $G$ with the set of false twins $F$ labeled $v_1, \ldots, v_k$. 
Label their neighbors consecutively  by $v_{k+1}$,...,$v_{d+k}$ 
and the remaining vertices of the graph by $v_{d+k+1}$,...,$v_n$. 
If $v_i, v_j \in F$ then  $(v_i, v_j) \not \in E$ and $d_i = d_j = d$.
So, the Laplacian matrix of $G$ can be written as

$$ 
L(G)=\left [ 
\begin{array}{r|c|c}
d I_{k \times k}    &  -J_{k \times d}   &  0_{k \times (n-k-d)} \\ \hline
-J_{d \times k} &   \star    & \star    \\
0_{n-k-d \times k}   &    \star   & \star \\
\end{array} \right ],
$$

where undefined elements $(\star)$ substitute the  integer elements of the $(n-k) \times(n-k)$-submatrix of $L(G)$ 
after taking out the $k$ first lines and columns of $L(G)$. 
 For $2 \leq i \leq k, e_1- e_i$ are eigenvectors of $L(G)$ associated to $d$ 
where $e_i$ is the $i-th$ vector of the canonical basis of $\mathbb{R}^n$. 
So, $d$ is an integer Laplacian eigenvalue of $L(G)$ with multiplicity at least $k-1 = |F|-1$.
\qed
\end{proof}

The following theorem is a redraft of a result due to Grone and Merris \cite{GM94}. 

\begin{theorem}\label{theo:truetwins}
Let $G$ be a connected graph with a set of true twins  $T$  such that each vertex has degree $d$. 
Then $d+1$ is an integer Laplacian eigenvalue of $L(G)$ with multiplicity at least $|T| -1$.
\end{theorem}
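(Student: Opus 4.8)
The plan is to mirror the proof of Theorem~\ref{theo:falsetwins}, adjusting for the fact that a set of true twins induces a clique rather than an independent set. Let $T=\{v_1,\dots,v_k\}$ be the set of true twins, each of degree $d$. Since $N[v_i]=N[v_j]$ for all $v_i,v_j\in T$, the vertices of $T$ are pairwise adjacent, so $G[T]\cong K_k$, and each $v_i\in T$ is adjacent to the same set of $d-(k-1)$ vertices outside $T$. Label the common outside neighbors $v_{k+1},\dots,v_{d+1}$ and the remaining vertices $v_{d+2},\dots,v_n$.

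First I would write $L(G)$ in block form with respect to this labelling. The top-left $k\times k$ block is $(d+1)I_k - (J_k - I_k) = (d+1)I_k - J_k + I_k$; more simply, the diagonal entries are $d$ and the off-diagonal entries within $T$ are $-1$, so this block equals $(d+1)I_k - J_k$. The block coupling $T$ to its outside neighbors is $-J_{k\times(d+1-k)}$ (every $v_i\in T$ is adjacent to every such vertex), and the block coupling $T$ to the far vertices is $0$. Then, for $2\le i\le k$, I would check that $e_1-e_i$ is an eigenvector of $L(G)$ with eigenvalue $d+1$: within the $T$-block, $\big((d+1)I_k-J_k\big)(e_1-e_i) = (d+1)(e_1-e_i)$ because $J_k(e_1-e_i)=0$; in the coupling block, $-J_{k\times(d+1-k)}^{\!\top}(e_1-e_i)=0$ since the two nonzero coordinates are equal and cancel; and the far block contributes $0$. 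Hence $L(G)(e_1-e_i) = (d+1)(e_1-e_i)$. These $k-1$ vectors $e_1-e_2,\dots,e_1-e_k$ are linearly independent, so $d+1$ is a Laplacian eigenvalue with multiplicity at least $k-1=|T|-1$.

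There is no real obstacle here — the argument is essentially identical to the false-twins case, with the single change that each diagonal degree $d$ now sits opposite $k-1$ entries equal to $-1$ inside the twin block, which is exactly what converts the eigenvalue from $d$ to $d+1$. The one point to state carefully is the bookkeeping of neighborhoods: a true twin has $k-1$ neighbors inside $T$ and $d-(k-1)$ neighbors outside, and it is this split that makes $J_k(e_1-e_i)=0$ absorb the $-1$ off-diagonal entries cleanly. I would present the block matrix explicitly, as in the previous proof, and then exhibit the eigenvectors, concluding with the multiplicity bound.
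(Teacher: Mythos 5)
Your proof is correct and follows essentially the same route as the paper, which simply says to repeat the false-twins argument with the top-left block adjusted; you supply the details explicitly, correctly identifying that block as $(d+1)I_k - J_k$ and verifying that the vectors $e_1-e_i$ remain eigenvectors, now for the eigenvalue $d+1$. If anything, your version is more precise than the paper's one-line reduction, but the underlying idea is identical.
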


\begin{proof}
The prove is similar to that given in Theorem \ref{theo:falsetwins} where it is enough to change the block $d I_{k \times k}$
 in the Laplacian matrix $L(G)$ by the block  $(d +1)I_{k \times k}$.
 \qed
\end{proof}

Figure \ref{fig:false-true} ilustrates the computation of integer Laplacian eigenvalues of a graph by Theorems \ref{theo:falsetwins} and \ref{theo:truetwins}.
  $F=\{g,h,i\}$ is a set of false twins;
each vertex has degree $2$ and, from
Theorem \ref{theo:falsetwins}, $G$ has $2$ as a Laplacian eigenvalue with multiplicity 2. 
$T=\{k, l, j\}$ is a set of true twins; each vertex has degree $4$. 
By Theorem \ref{theo:truetwins}, $G$ has $5$ 
as a Laplacian eigenvalue with multiplicity $2$.

\begin{figure}[!h]
\begin{center}
\begin{tikzpicture}
  [scale=.32,auto=left]
\tikzstyle{every node}=[circle, draw, fill=white,
                         inner sep=0pt, minimum width=15pt]

\node (a) at (7,-3){$a$};
\node (b) at (4,0){$b$};
\node (f) at (10,0){$f$};
\node (d) at (7,7) {$d$};
\node (c) at (4,4){$c$};
\node (e) at (10,4){$e$};
\node (g) at (12,9){$g$};
\node (i) at (15,7){$i$};
\node(h) at (14,4){$h$};
\node(j) at (0,0){$j$};
\node(k) at (0,4){$k$};
\node(l) at (-2,2){$l$};
\foreach \from/\to in {a/b, c/d,e/f, f/a, 
g/d,g/e,i/d,i/e,h/d,h/e,
j/b, j/c, j/k, j/l, k/b,k/c,k/l, l/b,l/c} 
\draw (\from) -- (\to);
\node at (18,-3) [draw=none,fill=none] {$G$};
\end{tikzpicture}
\end{center}
\caption{Laplacian eigenvalues $\times$ false and true twins}
\label{fig:false-true} 
\end{figure}
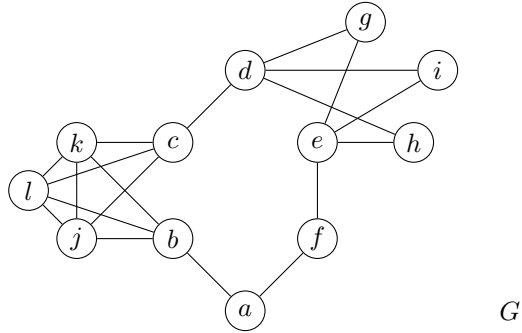

\subsection{Algorithmic aspects}

If  $G=(V,E)$ has universal vertices, false or true twin sets,
it is possible to efficiently recognize the existence
of some of its integer Laplacian eigenvalues, 
by means of simple algorithms.

\begin{itemize}
\item Consider as input  the neighborhood $N(v)$ and the closed neighborhood $N[v]$, 
represented by ordered lists, for all $v \in V$.

\item Apply a lexicographic ordering in both sets of lists;
all false twins and true twins will appear together in the resulting lists.
Using the radix sort \cite{W90}, the result can be found in $O(n^2)$ time complexity.
\end{itemize}

Observe that  in order to determine the universal vertices,
it is  necessary to  test  if $|N(v)| = n-1$.
In this case, the time complexity is $O(n+m)$.


\section{Results on strictly chordal graphs}

The  {\em block duplicate graphs} were introduced by Golumbic and Peled \cite{GP02};
it is a graph obtained by adding zero or more true twins
to the vertices of a block graph. 
The class, a subclass of chordal graphs, was also  defined as {\em strictly chordal graphs}  
by \cite{K05}  and
it  was proved to be {\em gem-free} and  {\em dart-free} \cite{GP02, KLY06} (see Figure \ref{fig:gem-dart}).

In this section we present new results for this class.
In Subsection 3.1, an important 
result due to Cardoso and Rojo  \cite{CR17} is rewritten in Theorem \ref{theo:CardosoRojo}
 taking into account  only integer Laplacian eigenvalues. 
In Subsection 3.2, properties of strictly chordal graphs are stated. 
Based on them, in the last subsection, we prove two new results.
Theorem \ref{theo:ibiconect} shows how Theorem \ref{theo:truetwins} can be applied directly
 to the minimal vertex separators of 
strictly chordal graphs and  Theorem \ref{theo:simplicialB(S)}, our main result, gives  the minimum
number of integer Laplacian eigenvalues of a strictly chordal graph through particular sets of simplicial vertices.

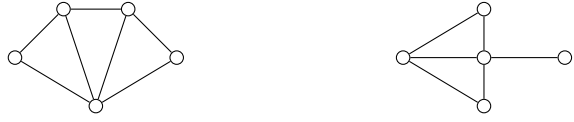
\begin{figure}[h]
\begin{center}
\begin{tikzpicture}
 [scale=.43,auto=left]
 \tikzstyle{every node}=[circle, draw, fill=white,
                        inner sep=1.8pt, minimum width=4pt]

  \node (c) at (1,10) {};
  \node (d) at (3,10) {};
  \node (b) at (-0.5,8.5)  {};
  \node (e) at (4.5,8.5)  {};
  \node (a) at (2,7) {};

  \foreach \from/\to in {c/d,c/b,c/a,d/a,d/e,b/a,e/a}
    \draw (\from) -- (\to);

  \node (n) at (14,10) {};
  \node (k) at (11.5,8.5) {};
 \node (l) at (14,8.5)  {};
  \node (m) at (16.5,8.5)  {};
  \node (j) at (14,7)  {};
  \foreach \from/\to in {n/k,n/l, k/l, l/m,k/j, l/j} 
    \draw (\from) -- (\to);
 \end{tikzpicture}
\end{center}
\caption{Gem and  dart  graphs} 
\label{fig:gem-dart}
\end{figure}


\subsection{Clusters and some integer Laplacian eigenvalues}

 Let $G$ be a connected chordal graph of order $n$ with a set of $k$ false twins $F$.  
 Let $S$ be the set of $\ell$ neighbors of each vertex of $F$. 
The pair   
 $(F,S)$ is denoted a {\em $(k,\ell)$-cluster}  or, simply, a {\em cluster} of $G$. 
 It is immediate that $S$ is a minimal vertex separator of the graph. 
 Cardoso and Rojo \cite{CR17} have defined clusters for any connected graph using the concept of co-neighbor vertices. 
 Also they built  a new graph $H$ from graph $G$ by insertion of edges between pairs of false twins vertices and 
 proved that $SpecL(G) \cap SpecL(H)$ is not empty. 
  
 Let $G$ be a connected graph of order $n$ having a $(k,\ell)$-cluster $(F,S)$. 
 Let $H$ be a graph of order $k$. 
 Then $G(H)$ denotes the graph obtained from $G$ when the vertices of $H$ are identified with the vertices of $F$ 
and it is simply the original graph $G$  
 by adding edges between one or more pairs of vertices of $F$. 
  The clusters $(F_1,S_1)$ and $(F_2,S_2)$ are disjoint in $G$ if $F_1\cap F_2 = \emptyset$ and $S_1 \cap S_2 =  \emptyset$. 
 In the more general form, Cardoso and Rojo  \cite{CR17} gave the following definition. 
  Consider $G$ having $t$ pairwise disjoint clusters $(F_1,S_1), (F_2,S_2),...,(F_t,S_t).$ For $1 \leq j \leq t$, 
let $H_j$ be a graph of order $|F_j|$. 
Then $G(H_1, . . . , H_t)$ denotes the graph obtained from $G$ where the vertices of each graph $H_j$ 
is  identified with the vertices in $F_j$. 
It follows that $V(H_j) = F_j$, $V(G(H_1,..., H_t)) = V (G)$ and 
$E(G(H_1, . . . , H_t)) = E(G) \cup  E(H_1)\cup  . . . \cup E(H_t).$
 We denote $F = \bigcup_{j=1,...t} F_j$ and $FS =\bigcup_{j=1,...t} (F_j \cup S_j)$.
 
 The next result shows that the Laplacian eigenvalues of $G(H_1,...,H_t)$ remain the same, 
 independently of the graphs $H_1,....,H_t$, with the exception of $|F_1|+....+|F_t| - t$ of them. 
 Let $\widetilde{L}(G-F)$ be the principal submatrix of $L(G)$ obtained after deleting the rows and columns 
 of $L(G)$ with indices in $F$. 

\begin{theorem}\label{theo:CardosoRojo} {\rm \cite{CR17}} 
Let $G$ be a connected graph of order $n$ with $t \geq 1$ pairwise disjoint clusters $(F_1,S_1),...,(F_t,S_t)$. 
For  $j = 1,...,t$, assume that $|F_j| = k_j$, $|S_j| = \ell_j$ and 
 each graph $H_j$ is defined as above to obtain $G(H_1, . . . , H_t)$
such that  
$L_j \textbf{1}_{k_j} = \mu _{k_j } \textbf{1}_{k_j}$ where $L_j$ is the Laplacian matrix of $H_j$.  It follows that

\begin{equation}\label{eq:1}
det(\lambda I - L(G(H_1,...,H_t))) = p_L(\lambda) \prod_{j =1,t} \prod_{i =1,k_j -1} (\lambda - (\ell_j + \mu_i(L_j)))
\end{equation}
where $p_L (\lambda)$ is the characteristic polynomial of the matrix $\widetilde{L} (G-FS)$ 
whose degree is $n - \sum_{j=1,.., t} k_j + t$. In particular, when the graphs $H_1,...,H_t$ 
are the empty graphs, $G(H_1,...,H_t) = G$ and
\begin{equation}\label{eq:2}
det(\lambda I - L(G)) = p_L(\lambda )\prod_{j=1,t} (\lambda - \ell_j)^{k_j-1}.
\end{equation}
\end{theorem}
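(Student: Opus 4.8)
The plan is to exploit the block structure of $L(G(H_1,\dots,H_t))$ induced by the partition of $V(G)$ into $F = \bigcup_j F_j$, then $S = \bigcup_j S_j$, and finally the remaining vertices $R = V(G)\setminus FS$. First I would order the vertices so that the $F_j$ come first (grouped by $j$), then the $S_j$, then $R$. Because each $(F_j,S_j)$ is a $(k_j,\ell_j)$-cluster, every vertex of $F_j$ has neighborhood exactly $S_j$ in $G$, so in $L(G(H_1,\dots,H_t))$ the diagonal block on $F_j$ is $(\ell_j) I_{k_j} + L_j$ (the $\ell_j$ coming from the degree contributed by $S_j$, and $L_j$ from the added edges of $H_j$), the block between $F_j$ and $S_j$ is $-J_{k_j\times\ell_j}$, and there are no other nonzero entries in the $F_j$-rows. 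The $F_j$-blocks are mutually non-adjacent and not adjacent to $R$.

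The key computational step is a cofactor/Schur-complement expansion of $\det(\lambda I - L(G(H_1,\dots,H_t)))$ along the $F$-rows. For each $j$, diagonalize $L_j$ using the hypothesis that $\mathbf{1}_{k_j}$ is an eigenvector: choose an orthonormal basis of $\mathbb{R}^{k_j}$ consisting of $\frac{1}{\sqrt{k_j}}\mathbf{1}_{k_j}$ together with eigenvectors $u_{j,i}$ of $L_j$ for eigenvalues $\mu_i(L_j)$, $i=1,\dots,k_j-1$, each orthogonal to $\mathbf{1}_{k_j}$. Conjugating the $F_j$-block by this orthogonal matrix (extended by the identity elsewhere) leaves $\det(\lambda I - L)$ unchanged, turns the $F_j$-diagonal block into a diagonal matrix with entries $\lambda - \ell_j - \mu_{k_j}$ (wait — with $\mathbf 1$-eigenvalue which one names $\mu_{k_j}(L_j)$; for the empty graph this is $0$) and $\lambda - \ell_j - \mu_i(L_j)$ for $i<k_j$, and crucially kills the coupling to $S_j$ in all the $u_{j,i}$-directions since $u_{j,i}^\top \mathbf 1 = 0$ makes $u_{j,i}^\top J = 0$. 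Hence the determinant factors as $\prod_{j}\prod_{i=1}^{k_j-1}(\lambda - \ell_j - \mu_i(L_j))$ times the determinant of the matrix obtained by keeping only the $\mathbf 1$-direction of each $F_j$ together with all of $S$ and $R$.

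It then remains to identify that residual determinant with $p_L(\lambda)$, the characteristic polynomial of $\widetilde L(G-FS)$, of degree $n - \sum_j k_j + t$. This is the step I expect to be the main obstacle: after the reduction one is left with a matrix of size $(t + |S| + |R|)\times(t + |S| + |R|)$ whose $\mathbf 1$-rows still couple to $S$, so it is not literally $\lambda I - \widetilde L(G-FS)$ — one must do a further Schur complement eliminating the $t$ surviving $\mathbf 1$-coordinates. I would argue that performing this elimination produces exactly $\det(\lambda I - \widetilde L(G-FS))$ up to the stated polynomial identity, either by a direct block computation (the $t\times t$ corner is $\mathrm{diag}(\lambda - \ell_j k_j + \text{(correction)})$ and the off-diagonal coupling to $S_j$ has norm $\sqrt{k_j}$, so the Schur complement contributes rational functions that, multiplied back in, rebuild the characteristic polynomial of $\widetilde L(G-FS)$), or more cleanly by appealing directly to Theorem~\ref{theo:CardosoRojo} as stated by Cardoso and Rojo, since the statement we are proving \emph{is} their theorem and equations~(\ref{eq:1}) and~(\ref{eq:2}) are precisely their conclusion. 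Finally, equation~(\ref{eq:2}) is the special case $H_j$ empty: then $L_j = 0$, so $\mu_i(L_j) = 0$ for all $i$, and $\prod_i(\lambda - \ell_j - \mu_i(L_j)) = (\lambda-\ell_j)^{k_j-1}$, while $G(H_1,\dots,H_t) = G$; substituting into~(\ref{eq:1}) gives~(\ref{eq:2}) immediately.
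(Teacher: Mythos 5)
The paper does not actually prove Theorem~\ref{theo:CardosoRojo}: it is imported verbatim from Cardoso and Rojo \cite{CR17} with only a citation, so there is no in-paper proof to compare yours against. Judged on its own terms, the first and main part of your argument is correct and is the standard route: ordering the vertices as $F$, then $S$, then $R$, noting that the $F_j$-diagonal block of the Laplacian is $\ell_j I_{k_j}+L_j$ and the only off-diagonal coupling is $-J_{k_j\times \ell_j}$ into $S_j$, and then conjugating by an orthogonal matrix whose first column is $\tfrac{1}{\sqrt{k_j}}\mathbf{1}_{k_j}$ does kill the coupling in every direction orthogonal to $\mathbf{1}_{k_j}$ (since $u^\top J=0$ when $u\perp\mathbf{1}$), and correctly extracts the factors $\prod_{i=1}^{k_j-1}\bigl(\lambda-\ell_j-\mu_i(L_j)\bigr)$. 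One small point: $L_j\mathbf{1}_{k_j}=0$ automatically for any Laplacian, so $\mu_{k_j}(L_j)=0$ and the surviving diagonal entry is simply $\lambda-\ell_j$; your parenthetical hesitation there resolves trivially. Your derivation of (\ref{eq:2}) from (\ref{eq:1}) is also fine.

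The genuine gap is exactly where you flag it: the identification of the residual $\bigl(t+|S|+|R|\bigr)\times\bigl(t+|S|+|R|\bigr)$ determinant with $p_L(\lambda)$. You assert that a further Schur complement ``rebuilds'' the characteristic polynomial of $\widetilde L(G-FS)$ without carrying out the computation, and your fallback --- ``appealing directly to Theorem~\ref{theo:CardosoRojo} as stated by Cardoso and Rojo'' --- is circular, since that is the statement being proved. Note that the residual determinant is automatically a monic polynomial of the stated degree $n-\sum_j k_j+t$, so equation (\ref{eq:1}) holds with $p_L$ \emph{defined} as that determinant; what remains open in your write-up is only whether it coincides with the characteristic polynomial of the particular matrix named in the statement. (That description is itself suspect as transcribed here: deleting the rows and columns indexed by $FS=\bigcup_j(F_j\cup S_j)$ from $L(G)$ yields a matrix of order $n-\sum_j(k_j+\ell_j)$, not $n-\sum_j k_j+t$, so the surviving $\mathbf{1}$-directions and the $S_j$-blocks must be retained in whatever matrix $p_L$ really refers to in \cite{CR17}.) To close the argument honestly you would either carry out that last block computation explicitly or simply cite \cite{CR17} for the whole theorem, as the paper does --- but you should not present the citation as a step inside a proof of the same theorem.
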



\subsection{Chordal and strictly chordal graphs}

A chordal graph is a graph in which every cycle of length four and greater has a cycle chord. 
Basic concepts about  chordal graphs are assumed to be known and 
can be found in Blair and Peyton \cite{BP93}  and Golumbic \cite{Go04}.
Following, the most pertinent concepts are reviewed.

A subset $S \subset V$ is a {\em separator} of $G$ if at least two vertices in 
the same connected component of $G$ are in two distinct connected components of
$G[V\setminus S]$. 
The set $S$ is a {\em minimal separator} of $G$ if $S$ is a
separator and no proper set of $S$ separates the graph.
The set of minimal separators of $G$ is denoted by ${\mathbf S}$.

Let $G = (V, E)$ be a chordal graph and $u,v  \in V$. 
A subset $S \subset V$  is a {\em vertex separator}  for
non-adjacent vertices $u$  and $v$  (a $uv$-separator) if the
removal of $S$ from the graph separates $u$ and $v$  into distinct
connected components. 
If no proper subset of $S$  is a $uv$-separator then $S$ is a {\em minimal $uv$-separator}. 
When the pair of vertices remains unspecified, we refer to $S$  as a {\em
minimal vertex separator} ({\em mvs}). 
The set of minimal vertex separators  of $G$ is denoted by $\mathbb{S}$.
A linear algorithm to determine the set of  minimal vertex separators can be find in \cite{MP10}. 
A minimal separator of $G$ is always a minimal vertex separator
but the converse is not true.

A {\em strictly chordal graph } is  obtained by adding zero or more true twins to each vertex of a block graph $G$.
A  new characterization based on minimal vertex separators was presented by Markenzon and Waga \cite{MW15}.
Based on the characterization theorem, a recognition algorithm  becomes very simple.

\begin{theorem}\label{theo:caract2}
Let  $G=(V,E)$ be a chordal graph and $\mathbb S$ be the set of minimal vertex separators of $G$.
The following statements are equivalent: 
\begin{enumerate}
  \item $G$ is a strictly chordal graph.
  \item For any distinct $S, S^{\prime} \in {\mathbb S}$, $S \cap S^{\prime}= \emptyset$.
  \item $G$ is gem-free and dart-free. 
\end{enumerate}
\end{theorem}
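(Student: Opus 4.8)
The plan is to prove the cycle of implications $(1)\Rightarrow(3)\Rightarrow(2)\Rightarrow(1)$, which makes the three statements equivalent. Throughout I will freely use the standard facts about chordal graphs from \cite{BP93,Go04}: every minimal vertex separator of a chordal graph is a clique; $G$ admits a clique tree $\mathcal{T}$ whose nodes are the maximal cliques, in which the cliques containing a fixed vertex $v$ induce a subtree $\mathcal{T}_v$, and whose edge labels $\{\,Q\cap Q' : QQ'\in E(\mathcal{T})\,\}$ are exactly the minimal vertex separators of $G$; a simplicial vertex lies in a unique maximal clique; and a simplicial vertex never belongs to a minimal vertex separator (its neighbourhood is a clique, so it cannot have neighbours in two distinct components of $G$ minus a separator).

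For $(1)\Rightarrow(3)$ I would first observe that neither the gem nor the dart contains a pair of true twins, since in each of them all closed neighbourhoods are pairwise distinct. Now suppose $G'$ is obtained from a gem-free graph $G$ by adding a true twin $v'$ of a vertex $v$, and that $G'$ contained an induced gem. That gem cannot use both $v$ and $v'$ (they would be true twins inside it), so replacing $v'$ by $v$ — which has the same neighbours outside $\{v,v'\}$ — would exhibit an induced gem already in $G$, a contradiction; hence $G'$ is gem-free, and the identical replacement argument shows $G'$ is dart-free. Since a strictly chordal graph is obtained from a block graph by successively adding true twins, and a block graph is gem- and dart-free (the gem and the dart each contain an induced diamond $K_4-e$, which no block graph contains), iterating gives $(1)\Rightarrow(3)$; this is also recorded in \cite{GP02,KLY06}.

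For $(3)\Rightarrow(2)$ I would argue the contrapositive: if a chordal graph $G$ has two distinct minimal vertex separators $S\neq S'$ with a common vertex $w\in S\cap S'$, then $G$ contains an induced gem or an induced dart. Fix $w$; since $S$ and $S'$ are minimal separators, fix full components $C_1,C_2$ of $G-S$ (every vertex of $S$ has a neighbour in each $C_i$) and full components $D_1,D_2$ of $G-S'$. Because $S$ and $S'$ are cliques, $S'\setminus S$, when non-empty, lies entirely in one component of $G-S$, and symmetrically for $S\setminus S'$. The argument then splits according to the mutual position of $S$ and $S'$: the \emph{nested} case, where (say) $S'\subsetneq S$, and the \emph{crossing} case, where private vertices $x\in S\setminus S'$ and $y\in S'\setminus S$ both exist. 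In each case I would choose five vertices — the common vertex $w$, one or two private vertices, and one vertex from each of two suitably selected full components — and use chordality repeatedly to pin down which chords are forced and which are forbidden (vertices in different components of $G-S$ are non-adjacent, and a cycle running through $S$ and a component must be triangulated), concluding that the five vertices induce a gem in the crossing case (with the vertex of $S\cap S'$ becoming the universal one) and a dart in the nested case. Making this case analysis airtight — in particular choosing the vertices inside the full components so that precisely the forbidden subgraph, and not some larger chordal configuration, is induced — is the main obstacle of the whole proof.

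For $(2)\Rightarrow(1)$ I would induct on $n$. If $G$ has no minimal vertex separator then $G$ is complete (any two non-adjacent vertices would be separated by one), hence a block graph, hence strictly chordal. Otherwise pick a clique tree $\mathcal{T}$ and a leaf clique $Q$ with neighbour $Q'$, set $S=Q\cap Q'$ and $T=Q\setminus S\neq\emptyset$; every vertex of $T$ is simplicial with closed neighbourhood $Q$, so the vertices of $T$ are pairwise true twins. Choose $v_0\in T$ and let $G_1=G-(T\setminus\{v_0\})$. Then $G_1$ is chordal, its clique tree is $\mathcal{T}$ with the node $Q$ relabelled $S\cup\{v_0\}$ and all other nodes and edge labels unchanged; since simplicial vertices belong to no minimal vertex separator, the minimal vertex separators of $G_1$ are exactly those of $G$, so they are still pairwise disjoint. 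By induction $G_1$ is strictly chordal, i.e.\ $G_1$ is obtained from a block graph $B$ by adding true twins. Adding back the vertices of $T\setminus\{v_0\}$, each a true twin of $v_0$ in $G$, merely enlarges the twin class of the representative of $v_0$, so $G$ too is obtained from the block graph $B$ by adding true twins and is therefore strictly chordal. This closes the cycle and proves the equivalence.
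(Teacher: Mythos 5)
The paper itself offers no proof of Theorem~\ref{theo:caract2}: the statement is imported from Markenzon and Waga \cite{MW15} (the separator characterization) and from \cite{GP02,KLY06} (gem- and dart-freeness), so there is no in-paper argument to measure yours against; you are supplying a proof from scratch. Your implication $(1)\Rightarrow(3)$ is complete and correct: neither the gem nor the dart contains a pair of true twins, the twin-replacement argument shows that adding a true twin preserves gem- and dart-freeness, and block graphs are diamond-free hence contain neither forbidden subgraph. The implication $(3)\Rightarrow(2)$, however, is only a plan. You describe the split into nested and crossing separators and assert that five well-chosen vertices will induce a dart or a gem, but you do not carry out the case analysis, and you yourself flag it as ``the main obstacle of the whole proof.'' Nothing in the text verifies that the chosen vertices of the full components have no further adjacencies into $S\cup S'$ that would destroy the forbidden configuration; that verification is where the substance of this direction lies, and it is absent.

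The implication $(2)\Rightarrow(1)$ has a more concrete defect: the induction does not terminate. If the leaf clique $Q$ has exactly one private vertex, i.e.\ $|T|=1$, then $G_1=G-(T\setminus\{v_0\})=G$ and the inductive step makes no progress. The diamond $K_4-e$ (maximal cliques $\{a,b,c\}$ and $\{b,c,d\}$, single minimal vertex separator $\{b,c\}$) already defeats the argument: it satisfies (2), is not complete, and admits no reduction of your type, yet it is strictly chordal only because $b$ and $c$ are true twins sitting over a cut vertex of the underlying block graph. Structurally, your reduction deletes only simplicial true twins and never shrinks a minimal vertex separator, so it can never reach a block graph --- whose minimal vertex separators are all singletons --- from any graph possessing a separator of size at least $2$. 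What is missing is exactly the analogue of Property~\ref{prop:strictly-true} derived from hypothesis (2) rather than from (1): you must show that pairwise disjointness of the minimal vertex separators forces every separator to consist of pairwise true twins, delete all but one representative from each such separator as well, and then prove that a chordal graph all of whose minimal vertex separators are singletons is a block graph. Without that lemma, $(2)\Rightarrow(1)$ is not established.
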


Interesting properties of strictly chordal graphs can be stated. 

\begin{property}\label{prop:separator-mvs}{\rm\cite{M20}}
Let $G$ be a strictly chordal graph, $\mathbb S$ the set of minimal vertex separators of $G$ and 
$\mathbf S$ the set of minimal separators of $G$.
Then $\mathbb S= \mathbf S$.
\end{property}
\begin{proof}
In \cite{M20}, this property was proved for the non-inclusion chordal graphs. 
As strictly chordal graphs are a subclass of non-inclusion graphs, 
the  result follows.
\qed
\end{proof}

\begin{property}\label{prop:strictly-boundary}
Let $G$ be a strictly chordal graph.
Then all boundary cliques contain exactly one minimal vertex separator.
\end{property}
\begin{proof}
By definition, a maximal clique $Q$ is called a boundary clique if there exists a maximal clique $Q^\prime$
such that  $Q \cap Q^\prime $ is the set of non-simplicial vertices of $Q$. 
The set  $Q \cap Q^\prime $ is a clique; so it is a minimal vertex separator. 
As $G$ is a strictly chordal graph  there is not proper containment of separators, 
then a boundary clique contains exactly one minimal vertex separator.
\qed
\end{proof}

\begin{property}\label{prop:strictly-true}
Let $G$ be a strictly chordal graph and $ S$ a minimal vertex separator of $G$.
Then all vertices of $S$ are true twins.
\end{property}
\begin{proof}
By the definition,  a strictly chordal graph is obtained by adding zero or more true twins
to the vertices of a block graph. 
The separators of a block graph have cardinality one.
Adding true twins to this vertex  results in a separator of greater cardinality with the same set of neighboors. \qed
\end{proof}


\subsection{New results}

The next result shows how Theorem \ref{theo:truetwins} can be applied directly to the minimal vertex
separators of strictly chordal graphs.

\begin{theorem}\label{theo:ibiconect}
Let $G=(V,E)$ be a strictly chordal graph and  let
 $\mathbb{S }$ be the set of minimal vertex separators of $G$.
 Let $S \in \mathbb S$ and $v$ a vertex belonging to $S$.
 Then $d(v) + 1$  is an integer Laplacian eigenvalue of $L(G)$  with multiplicity at least $|S|-1$.
\end{theorem}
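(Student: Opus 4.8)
The plan is to obtain the statement as an immediate consequence of Property \ref{prop:strictly-true} together with Theorem \ref{theo:truetwins}, with only one small auxiliary observation in between. First I would invoke Property \ref{prop:strictly-true}: since $G$ is strictly chordal and $S\in\mathbb S$, every vertex of $S$ is a true twin of every other vertex of $S$, so $S$ is itself a set of true twins of $G$ in the sense required by Theorem \ref{theo:truetwins}.

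Next I would record that true twins have equal degree. Indeed, if $u,w\in S$ then $N[u]=N[w]$, hence $d(u)+1=|N[u]|=|N[w]|=d(w)+1$, so $d(u)=d(w)$. Applying this to all vertices of $S$ and to $v\in S$ in particular gives that every vertex of $S$ has the same degree $d(v)$. Thus $S$ is a set of true twins of $G$ in which each vertex has degree $d=d(v)$, which is exactly the hypothesis of Theorem \ref{theo:truetwins}.

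Finally I would apply Theorem \ref{theo:truetwins} with $T=S$ (note that $G$ is connected, as assumed throughout): this yields that $d+1=d(v)+1$ is an integer Laplacian eigenvalue of $L(G)$ with multiplicity at least $|T|-1=|S|-1$, which is the claim. There is no genuine obstacle here: the substantive content has already been front-loaded into Property \ref{prop:strictly-true} (the fact that minimal vertex separators of strictly chordal graphs consist of true twins), and the remaining argument is just a verification that the hypotheses of Theorem \ref{theo:truetwins} are met. If a self-contained argument were preferred, one could instead reprise the block-matrix computation used in the proof of Theorem \ref{theo:truetwins}, exhibiting the $|S|-1$ linearly independent eigenvectors of the form $e_i-e_j$ with $v_i,v_j\in S$ associated to $d(v)+1$; but citing the earlier theorem is cleaner.
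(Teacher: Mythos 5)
Your proposal is correct and follows essentially the same route as the paper's own proof: both apply Property \ref{prop:strictly-true} to conclude that $S$ is a set of true twins and then invoke Theorem \ref{theo:truetwins} with $T=S$. Your explicit verification that true twins share a common degree is a small but welcome addition that the paper leaves implicit.
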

\begin{proof}
As $G$ is a strictly chordal graph, by Theorem \ref{theo:caract2}, 
the minimal vertex separators are two by two disjoint sets.
For each $S$, all vertices are true twins (Property \ref{prop:strictly-true}).
By Theorem \ref{theo:truetwins}, the result follows.
\qed
\end{proof}

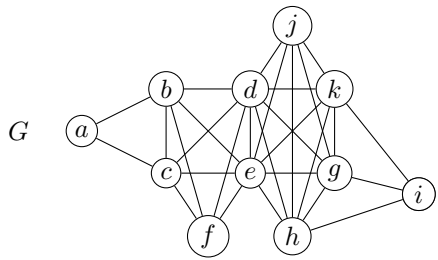
\begin{figure}[!h]
\begin{center}
\begin{tikzpicture}
  [scale=.28,auto=left]
  \tikzstyle{every node}=[circle, draw, fill=white,
                        inner sep=1.8pt, minimum width=4pt]

\node (a) at (-4,-2){$a$};
\node (b) at (0,0){$b$};
\node (d) at (4,0){$d$};
\node (c) at (0,-4){$c$};
\node (e) at (4,-4){$e$};

\node (j) at (6,3){$j$};
\node (k) at (8,0){$k$};
\node (g) at (8,-4){$g$};
\node (h) at (6,-7){$h$};
\node (i) at (12,-5){$i$};
\node (f) at (2,-7){$\small f$};
\node (i) at (12,-5){$i$};

\node at (-7,-2) [draw=none,fill=none] {$G$};

\foreach \from/\to in {a/b, a/c,  b/c, b/e, b/d,b/f,c/d,c/e,c/f,d/e,d/f, d/k,d/g,d/h,e/g,e/f,e/k,e/h,k/g, k/h,k/i,g/i,g/h,h/i, j/d, j/e, j/h, j/g, j/k}
 \draw (\from) -- (\to);
 \end{tikzpicture}
 \end{center}
\caption{Laplacian eigenvalues $\times $ separators in a strictly chordal graph} 
\label{fig:theo6}
\end{figure}

\medskip
Observe that, if $G$ is a biconnected strictly chordal graph, there is at least one non zero integer eigenvalue 
of $L(G)$ for each minimal vertex separator of $G$.  

\medskip
Figure \ref{fig:theo6}  ilustrates how Theorem 6 can easily determine 
$6, 9$ and $7^{(2)}$ as integer Laplacian eigenvalues of the graph which spectrum is:

$$SpecL(G) = [0; 1.18541; 2.61293; 3.72314; 5.64590; 6; 6.55734; 7^{(2)}; 9; 9.27527].$$

The next result, Theorem \ref{theo:simplicialB(S)}, gives  the minimum number of 
integer Laplacian eigenvalues of a strictly chordal graph that derives from the  quantity of some simplicial vertices of the graph. 

Let $\mathbb Q$ be the set of maximal cliques.
For each minimal vertex separator $S$, $S \in \mathbb S$, let us denote by $B(S)$
the set of boundary cliques that contain $S$.
Each maximal clique belonging to $B(S)$ can be partioned in two
subsets: $S$ and $P$, a set of simplicial vertices.
If $|B(S)| > 1$ and each maximal clique has exactly one simplicial vertex,
$B(S)$ is a {\em cluster}, as defined by Cardoso and Rojo \cite{CR17}. 

\begin{theorem} \label{theo:simplicialB(S)}
Let $G$ be a strictly chordal graph and ${\mathbb S}^*$ the set of minimal vertex separators
of cardinality $t$, 
 such that, for each $S_i \in  {\mathbb S}^*$, $|B(S_i)| >1$. 
 Let ${\mathcal P}_i$ be the set of simplicial vertices belonging to  $B(S_i)$.
Then $G$ has at least $\sum_{i=1,...,t} \left( |{\mathcal P}_i|-1 \right)$ integer Laplacian eigenvalues.
\end{theorem}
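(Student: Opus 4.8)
The plan is to fix a single minimal vertex separator $S \in {\mathbb S}^*$, to analyse the local structure of the boundary cliques containing it, and to exhibit $|{\mathcal P}_i|-1$ linearly independent eigenvectors of $L(G)$ that have integer eigenvalues and whose supports lie inside ${\mathcal P}_i$. The global estimate then follows, because distinct separators in ${\mathbb S}^*$ — and hence the associated vertex sets ${\mathcal P}_i$ — are pairwise disjoint, so these eigenvectors can simply be counted across all $i$.

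First I would describe the structure around $S=S_i$. Write $B(S)=\{Q_1,\dots,Q_r\}$ with $r=|B(S)|\geq 2$. By Property \ref{prop:strictly-boundary} each $Q_j$ contains exactly one minimal vertex separator, which must be $S$ itself; since the non-simplicial vertices of a boundary clique form a minimal vertex separator, this forces $Q_j = S\cup P_j$, where $P_j\neq\emptyset$ is the set of simplicial vertices of $Q_j$. A simplicial vertex lies in a unique maximal clique, so the $P_j$ are pairwise disjoint and ${\mathcal P}_i=P_1\cup\dots\cup P_r$ is a disjoint union. Moreover, for $v\in P_j$ one has $N(v)=S\cup(P_j\setminus\{v\})$, hence $d(v)=|S|+|P_j|-1$; every vertex of $S$ is adjacent to all of ${\mathcal P}_i$; and no vertex of $P_j$ has a neighbour outside $S\cup P_j$.

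Next I would build the eigenvectors supported on ${\mathcal P}_i$ in two families. Inside a fixed $Q_j$ the vertices of $P_j$ are true twins of common degree $|Q_j|-1$, so Theorem \ref{theo:truetwins} (equivalently, the vectors $e_u-e_{u'}$ with $u,u'\in P_j$) gives a space $A_j$ of dimension $|P_j|-1$ consisting of eigenvectors for the integer eigenvalue $|Q_j|$. For the second family, consider the vectors that are constant on each $P_j$, with value $c_j$ on $P_j$, subject to $\sum_{j}|P_j|\,c_j=0$; a direct substitution into $L(G)$ — using that such a vector vanishes on $S$ and outside ${\mathcal P}_i$, and that its total coordinate-sum is $0$ — shows that each of them is an eigenvector for the integer eigenvalue $|S|$, yielding a space $B_i$ of dimension $r-1$. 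The spaces $A_1,\dots,A_r,B_i$ are independent: a vector of $A_j$ has zero coordinate-sum on $P_j$ whereas a vector of $B_i$ is constant on $P_j$, so any common element vanishes on every $P_j$. Hence $W_i:=A_1\oplus\dots\oplus A_r\oplus B_i$ is an $L(G)$-invariant subspace of dimension $\sum_j(|P_j|-1)+(r-1)=|{\mathcal P}_i|-1$ on which $L(G)$ acts with integer spectrum.

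Finally, since $G$ is strictly chordal the separators in ${\mathbb S}^*$ are pairwise disjoint by Theorem \ref{theo:caract2}, and because each simplicial vertex determines its unique maximal clique — and thereby the unique minimal vertex separator inside it — the sets ${\mathcal P}_1,\dots,{\mathcal P}_t$ are pairwise disjoint as well. Thus $W_1,\dots,W_t$ have pairwise disjoint supports, so $U:=W_1\oplus\dots\oplus W_t$ is an $L(G)$-invariant subspace of dimension $\sum_{i=1}^t(|{\mathcal P}_i|-1)$ spanned by eigenvectors with integer eigenvalues; since $L(G)$ is symmetric, the total multiplicity of its integer eigenvalues is at least $\dim U=\sum_{i=1}^t(|{\mathcal P}_i|-1)$. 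The individual calculations are routine; the step needing most care is the bookkeeping of the boundary-clique structure around each separator, together with the verification that the ``averaged'' vectors of the second family really are eigenvectors and are independent of the true-twin eigenvectors.
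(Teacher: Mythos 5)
Your proof is correct, and its structural groundwork is exactly the paper's: you identify each boundary clique in $B(S_i)$ as $S_i\cup P_{ij}$, note that the $P_{ij}$ are pairwise disjoint sets of true twins whose neighbourhoods stay inside $S_i\cup P_{ij}$, and that the $\mathcal P_i$ are disjoint across separators because the separators themselves are disjoint (Theorem \ref{theo:caract2}). Where you diverge is in how the spectral conclusion is extracted. The paper treats $\mathcal P_i$ as a cluster in the sense of Cardoso and Rojo, views $G$ as $G'(H_i)$ with $H_i=\cup_j K_{n_{ij}}$ a disjoint union of cliques, and invokes Theorem \ref{theo:CardosoRojo} (equation (\ref{eq:1})) to read off the eigenvalues $s_i+n_{ij}$ (multiplicity $n_{ij}-1$) and $s_i$ (multiplicity $b_i-1$). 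You instead reprove the special case of that theorem by hand: the difference vectors $e_u-e_{u'}$ inside each $P_{ij}$ give the eigenvalue $|Q_{ij}|=s_i+n_{ij}$, and your second family (vectors constant on each $P_{ij}$, vanishing elsewhere, with total sum zero) gives the eigenvalue $|S_i|$ — I checked the computation $(Lx)_v=|S|x_v$ on $\mathcal P_i$, on $S$, and outside, and it goes through precisely because the $P_{ij}$ have no neighbours outside $S_i\cup P_{ij}$. The dimension count $\sum_j(n_{ij}-1)+(b_i-1)=|\mathcal P_i|-1$ and the disjoint-support argument across separators then finish the proof. What your route buys is self-containedness (no appeal to \cite{CR17}) and the explicit eigenvalues with their multiplicities, which is exactly the content the paper defers to Corollary \ref{cor:integer-values}; what the paper's route buys is brevity and a pointer to the more general perturbation framework. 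No gaps.
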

\begin{proof}
Let $G$ be a strictly chordal graph and  $S_i \in \mathbb{S}^{*}$ a minimal vertex separator of $G$.
Let $|S_{i}|=s_i$, $i= 1, \ldots, t$, $|B(S_i)| = b_i$ and $Q_{ij} \in B(S_i)$,  $j=1,...,b_i$. 
Let $P_{ij}$ be the set of simplicial vertices of  $Q_{ij}$, where $Q_{ij} = P_{ij} \cup S_i$.
As $G$ is a chordal graph, the subgraph induced
 by $P_{ij}$ is the complete graph $K_{n_{ij}}$ of order $n_{ij}$. 
 Moreover, the subgraph induced by ${\mathcal P}_i = \cup_{j=1,..,b_i} P_{ij}$ is 
 $H_i= \cup_{j=1,...,b_i} K_{n_{ij}}$. 
 So, $|{\mathcal P}_i| =\sum_{j=1,...,b_i} n_{ij}$.  

Since $G$ is a strictly chordal graph, their minimal vertex separators are two by two disjoint sets.
From Theorem $4$ (\ref{eq:1}), for each $i, i=1,...,t$, 
 all Laplacian eigenvalues of each $H_i$ are $ s_i + n_{ij}, j=1,...,b_i$. 
Since $G$ has $t$ induced subgraphs $H_i$ with $|{\mathcal P}_i|$ simplicial vertices,
 $G$ has at least $\sum_{i=1,...,t} \left( |{\mathcal P}_i|-1\right)$ integer Laplacian eigenvalues.
\qed
\end{proof}

\medskip
As already mentioned, if $G$ is a strictly chordal graph, their minimal vertex separators are two by two disjoint sets. 
In order to simplify the notation in the remaining of the text, we work with 
 only one minimal vertex separator $S \in \mathbb{S}^{*}$.

\begin{corollary}\label{cor:integer-values}
Let $S \in \mathbb S^*$, $B(S) = \{Q_1, \ldots, Q_{b}\}$  and $P_k$ the set of simplicial vertices of $Q_k$.
Then the following values are some of the integer Laplacian eigenvalues of $G$:
\begin{enumerate}
\item $|Q_k|$ with multiplicity $|P_k| -1$, $\forall Q_k \in B(S)$;
\item $|S|$ with multiplicity $b-1$.
\end{enumerate}
\end{corollary}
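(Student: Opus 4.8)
The plan is to derive the Corollary directly from Theorem \ref{theo:simplicialB(S)} specialized to a single minimal vertex separator, while extracting the precise eigenvalues (not merely their count) from the intermediate computations in that proof. Fix $S \in \mathbb{S}^*$ with $B(S) = \{Q_1, \ldots, Q_b\}$, write $s = |S|$, let $P_k$ be the simplicial vertices of $Q_k$ so that $Q_k = P_k \cup S$, and set $n_k = |P_k|$, hence $|Q_k| = s + n_k$. Since $G$ is chordal and the vertices of $P_k$ are simplicial in $Q_k$, the subgraph induced by $P_k$ is the complete graph $K_{n_k}$; moreover the vertices of $S$ are true twins (Property \ref{prop:strictly-true}) and, because $|B(S)| > 1$, each vertex of $P_k$ has exactly the neighborhood $(P_k \setminus \{v\}) \cup S$, so the $P_k$ are pairwise disjoint and the family $(P_k, S)$ plays the role of a $(n_k, s)$-cluster structure sitting over the common separator $S$.

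For part (1), I would fix a single clique $Q_k \in B(S)$ and observe that the $n_k$ vertices of $P_k$ are \emph{true twins} in $G$ — they induce $K_{n_k}$ and share the external neighborhood $S$ — each of degree $d = (n_k - 1) + s = |Q_k| - 1$. Applying Theorem \ref{theo:truetwins} directly to this set of true twins yields that $d + 1 = |Q_k|$ is an integer Laplacian eigenvalue with multiplicity at least $n_k - 1 = |P_k| - 1$. Doing this for every $Q_k \in B(S)$ gives the first family of eigenvalues; the disjointness of the $P_k$ (and the fact that the eigenvectors produced, namely differences $e_u - e_{u'}$ for $u,u' \in P_k$, are supported on disjoint index sets) ensures the multiplicities add up as claimed and are not double-counted.

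For part (2), I would use the cluster machinery of Theorem \ref{theo:CardosoRojo}. Consider the $b$ ``false-twin-like'' structure obtained by picking one simplicial vertex from each $Q_k$; more cleanly, note that after the internal edges of each $P_k$ are accounted for, the configuration of the $b$ cliques hanging off $S$ is exactly that of a $(b, s)$-cluster in the sense of Subsection 3.1: contracting each $P_k$ to a representative, these representatives are false twins with common neighbor set $S$ of size $\ell = s$. By equation (\ref{eq:2}) of Theorem \ref{theo:CardosoRojo}, $\ell = s = |S|$ appears as a Laplacian eigenvalue with multiplicity at least $b - 1$. Alternatively — and this is the route I would actually write, since it avoids re-deriving the cluster reduction — I would cite Theorem \ref{theo:simplicialB(S)} in the one-separator case: there the subgraph $H = \bigcup_{k=1}^b K_{n_k}$ contributes, via (\ref{eq:1}), the eigenvalues $s + n_k$ for $k = 1, \ldots, b$, and these are precisely the $|Q_k|$; the remaining integer eigenvalue forced by the cluster, coming from the $\mu_i(L_j) = 0$ pattern across the $b$ components glued along $S$, is the value $|S|$ with the stated multiplicity $b-1$.

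The main obstacle I anticipate is bookkeeping the multiplicities so that parts (1) and (2) together are internally consistent and genuinely all simultaneously eigenvalues — i.e., checking that the eigenspaces exhibited for the $|Q_k|$'s, for $|S|$, and those absorbed into $p_L(\lambda)$ are mutually independent, so that no eigenvalue is being counted twice when $|Q_k| = |S|$ or when $|Q_k| = |Q_{k'}|$ for $k \neq k'$. This is handled by the disjoint-support argument for the explicit eigenvectors in (1) and by the factorization in (\ref{eq:1})–(\ref{eq:2}), which keeps the cluster-derived factor $(\lambda - s)^{b-1}$ separate from $p_L(\lambda)$; but it requires a careful statement, so I would spell out the eigenvector supports rather than wave at them. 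Everything else is a direct specialization of results already proved above.
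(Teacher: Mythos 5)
Your proposal is correct. The paper gives no standalone proof of this corollary: it is meant to be read off from the factorization in Theorem~\ref{theo:CardosoRojo}, equation~(\ref{eq:1}), applied exactly as in the proof of Theorem~\ref{theo:simplicialB(S)} with the single cluster $(\mathcal P, S)$, $\mathcal P=\bigcup_k P_k$, and $H=\bigcup_{k}K_{n_k}$; since $L(H)$ has eigenvalues $n_k$ with multiplicity $n_k-1$ for each $k$ and $0$ with multiplicity $b$, the factor $\prod_{k}(\lambda-(s+n_k))^{n_k-1}(\lambda-s)^{b-1}$ delivers both items at once, with the combined multiplicities guaranteed by the polynomial identity. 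Your treatment of item~(2) is precisely this route (and you are right to discard the contraction heuristic in favor of it). For item~(1) you deviate: you observe that each $P_k$ is a set of true twins of degree $|Q_k|-1$ and invoke Theorem~\ref{theo:truetwins}, which is more elementary and produces explicit eigenvectors $e_u-e_{u'}$ with disjoint supports across the different $Q_k$; the paper itself acknowledges this overlap when it remarks, after Figure~\ref{ex4}, that the eigenvalues of item~(1) ``are already provided by Theorem~\ref{theo:truetwins}.'' The trade-off is that your hybrid then obliges you to argue independence between the true-twin eigenvectors of item~(1) and the cluster-derived factor of item~(2), a bookkeeping step you correctly flag; reading everything from the single factorization~(\ref{eq:1}) makes that concern moot, since the factors $(\lambda-|Q_k|)^{|P_k|-1}$ and $(\lambda-|S|)^{b-1}$ appear jointly and multiplicatively alongside $p_L(\lambda)$. (Note also that the potential collision $|Q_k|=|S|$ you worry about cannot occur, as $|Q_k|=|S|+|P_k|>|S|$.)
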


\begin{corollary}\label{cor:number}
Let $S \in \mathbb S^*$, $B(S) = \{Q_1, \ldots, Q_{b}\}$  and $P_k$ the set of simplicial vertices of $Q_k$.
Let $\mathcal P$ be the set of  simplicial vertices in $B(S)$ and $\mathcal F$, the set of false twins in $\mathcal P$.
The number of integer Laplacian eigenvalues uniquely provided by Theorem \ref{theo:simplicialB(S)} is:

 $$\left\{ \begin{array} {l l }
|{\mathcal P}|-1 - \sum_{k=1,b} (|P_k| -1) & \textrm{if  } {\mathcal F} = \emptyset; \\
|{\mathcal P}| - \sum_{k=1,b} (|P_k| -1) - |{\mathcal F}| & \textrm{otherwise.}
\end{array} \right.$$
\end{corollary}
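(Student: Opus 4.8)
\textbf{Proof plan for Corollary~\ref{cor:number}.}
The plan is to carefully separate the integer Laplacian eigenvalues that Theorem~\ref{theo:simplicialB(S)} produces from those already guaranteed by the earlier results (Theorems~\ref{theo:falsetwins} and~\ref{theo:truetwins}, equivalently items~1 and~2 of Corollary~\ref{cor:integer-values}), and then count what remains. Recall that working with a single $S\in\mathbb S^*$ is justified by the disjointness of minimal vertex separators, so throughout I fix $S$, write $B(S)=\{Q_1,\dots,Q_b\}$, let $P_k$ be the simplicial vertices of $Q_k$, and set $\mathcal P=\bigcup_{k=1}^b P_k$. By Property~\ref{prop:strictly-boundary} each $Q_k=P_k\cup S$, and since $G$ is chordal the subgraph on $P_k$ is a clique $K_{|P_k|}$, while the subgraph on $\mathcal P$ is the disjoint union $H=\bigcup_{k=1}^b K_{|P_k|}$. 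Theorem~\ref{theo:simplicialB(S)} (via equation~\eqref{eq:1} with the $H_k$ taken to be these cliques) contributes the $|\mathcal P|-b$ eigenvalues obtained as $|S|+|P_k|=|Q_k|$, each $Q_k$ giving $|P_k|-1$ of them, together with the $b-1$ copies of $|S|$ coming from $p_L(\lambda)$; I will instead phrase the count as: Theorem~\ref{theo:simplicialB(S)} certifies $|\mathcal P|-1$ integer eigenvalues in total attached to $S$ (this is $\sum(|P_k|-1)$ from the cliques plus $b-1$ from the separator, and $\sum(|P_k|-1)+(b-1)=|\mathcal P|-b+b-1=|\mathcal P|-1$).

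Next I would identify which of these $|\mathcal P|-1$ values are \emph{not} new, i.e.\ are already delivered by Theorems~\ref{theo:falsetwins} and~\ref{theo:truetwins} applied to twin sets inside $B(S)$. First, item~1 of Corollary~\ref{cor:integer-values} is exactly an application of Theorem~\ref{theo:truetwins} to the true-twin set $P_k$ when $|P_k|\ge 2$: the vertices of $P_k$ are true twins of common degree $|Q_k|-1$, so $|Q_k|$ is already an eigenvalue with multiplicity $|P_k|-1$. This accounts for $\sum_{k=1}^{b}(|P_k|-1)$ of the $|\mathcal P|-1$ eigenvalues. Hence in the absence of false twins among the $P_k$'s the number of eigenvalues \emph{uniquely} provided by Theorem~\ref{theo:simplicialB(S)} is $|\mathcal P|-1-\sum_{k=1}^{b}(|P_k|-1)$, which is the first branch. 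The content here is that when $\mathcal F=\emptyset$, the only overlap with prior theorems comes through item~1; in particular the $b-1$ copies of $|S|$ are genuinely new, because Theorems~\ref{theo:falsetwins}/\ref{theo:truetwins} applied to a \emph{single} separator $S$ only give the value $d(v)+1=|Q_k|-|P_k|+|S|$-type quantities, not $|S|$ itself — I would spell out this degree bookkeeping to confirm no double counting.

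For the second branch I must account for the extra overlap caused by false twins. If $u,v\in P_k$ and $u',v'\in P_{k'}$ with $k\ne k'$ happen to be false twins across the two cliques — which is exactly the situation when $|P_k|=|P_{k'}|=1$ would fail, so the relevant case is singleton $P_k$'s, each single simplicial vertex having neighborhood exactly $S$ — then those vertices form a false-twin set $\mathcal F$ of common degree $|S|$, and Theorem~\ref{theo:falsetwins} already yields $|S|$ with multiplicity $|\mathcal F|-1$. So when $\mathcal F\ne\emptyset$, an additional $|\mathcal F|-1$ of the $b-1$ copies of $|S|$ were already known, and the newly-provided count drops to $\bigl(|\mathcal P|-1-\sum(|P_k|-1)\bigr)-(|\mathcal F|-1)=|\mathcal P|-\sum(|P_k|-1)-|\mathcal F|$, the second branch. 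I expect the main obstacle to be precisely this case analysis around the definition of $\mathcal F$: one must argue that false twins among simplicial vertices of $B(S)$ can only arise from distinct boundary cliques each reduced to a single simplicial vertex (otherwise two such vertices in the same $Q_k$ would be true twins, not false), so that $\mathcal F$ is well-defined as stated and the copies of $|S|$ it absorbs are counted exactly once; handling the boundary between ``false-twin'' and ``true-twin'' contributions without over- or under-counting is the delicate point, and I would verify it by carefully tracking, for each vertex of $\mathcal P$, its degree and closed neighborhood.
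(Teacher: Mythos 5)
The paper states Corollary~\ref{cor:number} without proof, so there is no official argument to compare against; judged on its own, your accounting is correct and is consistent with both Corollary~\ref{cor:integer-values} and the worked example following it (where $\mathcal F=\emptyset$, $|\mathcal P|=9$, $\sum_k(|P_k|-1)=5$, giving the paper's ``remaining three'' eigenvalues). Your two key observations are exactly the ones the paper leaves implicit: (i) item~1 of Corollary~\ref{cor:integer-values} duplicates Theorem~\ref{theo:truetwins} applied to the true-twin sets $P_k$, removing $\sum_k(|P_k|-1)$ from the total of $|\mathcal P|-1$; and (ii) false twins inside $\mathcal P$ can only be vertices whose boundary clique has a single simplicial vertex (two simplicial vertices in the same $Q_k$ are adjacent, hence true twins), so $\mathcal F$ consists of degree-$|S|$ vertices and Theorem~\ref{theo:falsetwins} pre-empts $|\mathcal F|-1$ of the $b-1$ copies of $|S|$, yielding the second branch. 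One small misattribution: the $b-1$ copies of $|S|$ do not come from $p_L(\lambda)$ in equation~\eqref{eq:1}; they arise from the $b-1$ zero eigenvalues of $L(H_i)$ (with $H_i=\bigcup_k K_{|P_k|}$) that survive in the product $\prod_{i=1}^{k_j-1}\bigl(\lambda-(\ell_j+\mu_i(L_j))\bigr)$, since $|S|+0=|S|$. This does not affect your count of $|\mathcal P|-1$ total eigenvalues attached to $S$, but it is worth correcting in a final write-up.
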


\bigskip

The following example ilustrates these results.
Consider the graph $G$ of Figure \ref{ex4}.  
There is a minimal vertex separator $S=\{d\}$ of  $G$, such that 
$B(S) = \{\{d,e,f\}, \{d,g,h\},\{d,i,j\},\{d,k,l,m\}\}$. 
So, ${\mathcal P}=\{e,f,g,h,i,j,k,l,m\}$. 
Since  $|B(S)|> 1$, by Corollary \ref{cor:integer-values}, 
there are $ 8$ integer Laplacian eigenvalues of   $G$:  $4^{(2)}, 3^{(3)}, 1^{(3)}$.
Some of them, $4^{(2)}, 3^{(3)}$, are already provided by Theorem \ref{theo:truetwins}.   
The remaining three integer eigenvalues are uniquely provided by Theorem \ref{theo:simplicialB(S)};
see the spectrum bellow.

$$ SpecL(G) = [0; 0.23941; 1^{(3)}; 1.53342; 3^{(3)}; 3.21582; 4^{(2)}; 11.01135 ].$$

\begin{figure}[!h]
\begin{center}
\begin{tikzpicture}
  [scale=.28,auto=left]
  \tikzstyle{every node}=[circle, draw, fill=white,
                        inner sep=1.8pt, minimum width=4pt]

\node (j) at (16,-2){$a$};
\node (k) at (20,-2){$b$};
\node (l) at (24,-2){$c$};
\node (m) at (28,-2){$d$};
\node (n1) at (31,5){$e$};
\node (n2) at (32,3){$f$};
\node (n3) at (33,1){$g$};
\node (n4) at (34,-1){$h$};
\node (n5) at (34,-3){$i$};
\node (n6) at (33,-5){$j$};
\node (n7) at (32,-7){$k$};
\node (n8) at (31,-9){$l$};
\node (n9) at (28,-7){$m$};

\node at (10,-2) [draw=none,fill=none] {$G$};

\foreach \from/\to in {j/k,k/l,l/m,m/n1,m/n2,m/n3,m/n4,m/n5,m/n6,m/n7,m/n8,n1/n2,n3/n4,n5/n6,n7/n8, n8/n9,n7/n9, m/n9} 
    \draw (\from) -- (\to);
 \end{tikzpicture}
 \end{center}
\caption{Laplacian eigenvalues $\times$ boundary cliques}
\label{ex4} 
\end{figure}
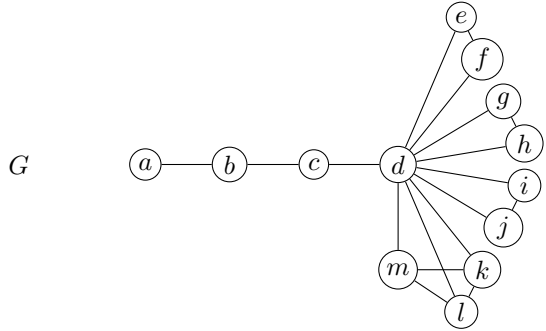

A strictly chordal graph can have more simplicial vertices than the ones described in 
Theorem \ref{theo:simplicialB(S)}.
Let $Q \in {\mathbb Q}$ and let $P$ be the set of simplicial vertices belonging
to $Q$. If $Q$ does not belong to any $B(S)$ being $S \in {\mathbb S}^*$ then,
by Theorem \ref{theo:truetwins}, we can conclude that the value $|Q|$ with
multiplicity $|P| - 1$ is an integer Laplacian eigenvalue of $G$.


\subsubsection{Algorithmic aspects}

For the class of strictly chordal graphs, several integer Laplacian eigenvalues are directly deduced from its
structural properties, as was seen in 
Theorems \ref{theo:truetwins}, \ref{theo:ibiconect},  \ref{theo:simplicialB(S)} and Corollary \ref{cor:integer-values}.
It is interesting to highlight the fact that  the determination of these eigenvalues has linear time complexity.
Given  $G=(V,E)$, the following steps are needed:

Step 1:  determine the set of maximal cliques $\mathbb Q$  and the set of minimal vertex separators $\mathbb S$ of $G$.
By Theorem \ref{theo:caract2}, all minimal vertex separators are pairwise disjoint, so each vertex of $G$
can be labeled as a simplicial vertex or as belonging to exactly one minimal vertex separator $S$.
This step can be accomplished in linear time complexity \cite{MP10}.

Step 2:  test if $|S| >1$, for each $S\in \mathbb S$;
in this case, by Theorem \ref{theo:ibiconect}, $d(v)+1$, for $v \in S$, with multiplicity $|S|-1$ is a Laplacian eigenvalue.
As each vertex belongs to at most one  minimal vertex separator, $\sum_{S\in \mathbb S}  |S| < n$.
This step has linear time complexity.

Step 3: determine $B(S)$ for each $S\in \mathbb S$. 
By Property \ref{prop:strictly-boundary}, all boundary cliques contain exactly one minimal vertex separator.
So, a sequential search through the maximal cliques is enough to determine $B(S)$
for all minimal vertex separators.
This step must search through the vertices of all maximal cliques; 
  the time complexity of this step is $O(n+m)$.

Step 4: For each $S \in \mathbb S$, test if $|B(S)| >1$.
If yes,  compute, for  $B(S)= \{Q_{1}, \ldots Q_{b}\}$,  the set $P_i$ of simplicial vertices of $Q_i$.
By Corollary \ref{cor:integer-values}, there are the following integer Lapalcian eigenvalues in $G$:
\begin{enumerate}
\item $|Q_i|$ with multiplicity $|P_i| -1$, $\forall Q_i \in B(S)$;
\item $|S|$ with multiplicity $b-1$.
\end{enumerate}
If no and $|B(S)| = 1$, then $B(S)= \{Q\}$ and $P$ is the set of simplicial vertices of $Q$.
Thus  $|Q|$  with  multiplicity $|P| - 1$  is an integer Laplacian eigenvalue of $G$.
This step must search all maximal cliques belonging to $B(S)$ for each $S\in \mathbb S$. 
As  $\cup_{S \in {\mathbb S}} B(S) \subseteq {\mathbb Q}$  the time complexity of this step is $O(n+m)$.

\medskip 
Step 5: determine the  maximal cliques $Q$ that do not belong to any $B(S)$.
For each one of these cliques, determine the set of simplicial vertices, $P$.
By Theorem 3, $|Q|$,  with  multiplicity $|P| - 1$,  is an integer Laplacian eigenvalue of $G$.
As this step must also search all maximal cliques of the graph then its time complexity is $O(n+m)$.

\section{Conclusions}

In this paper, some important results already known in the literature were reviewed in terms of
structural parameters of the graph.
Theorems \ref{theo:universal}, \ref{theo:falsetwins} and  \ref{theo:truetwins}
 provide some integer Laplacian eigenvalues of a connected graph,  
 enabling us to  efficiently implement their computation.
For the class of strictly chordal graphs, some new integer Laplacian eigenvalues are revealed, which can be seen in 
Theorem 7.
Also interesting to highlight is the fact that  the determination of a minimum number of integer Laplacian eigenvalues 
of a strictly chordal graph has linear time complexity,
provided by resourceful graph algorithms.

\section*{Acknowledgments}

This work was supported by grants 304177/2013-0 and 304706/2017-5, CNPq, Brazil.


\end{document}